\newcommand{\C}{{\mathbb{C}}}
\newcommand{\demo}{{\textit{\bf Proof.}}}
\newcommand{\fimh}{\hfill{$\square$}\vspace{0.35cm}}
\newtheorem{teo}{Theorem}[section]
\newtheorem{prop}[teo]{Proposition}
\newtheorem{df}{Definition}
\begin{document}

	\title{
		Asymmetric quantum codes on non-orientable surfaces%
	}
	
	\author{
		Waldir S. Soares Jr., Douglas F. Copatti, Giuliano G. La Guardia \\
		and Eduardo B. Silva%
		\thanks{ebsilva@uem.br}}

\date{ }	

\maketitle	
	
\begin{abstract}
In this paper, we construct new families of asymmetric quantum surface
codes (AQSCs) over non-orientable surfaces of genus $g\geq 2$ by
applying tools of hyperbolic geometry. More precisely, we prove that if the genus
$g$ of a non-orientable surface is even $(g=2h)$, then
the parameters of the corresponding AQSC are equal to the parameters
of a surface code obtained from an orientable surface of genus $h$. Additionally,
if $S$ is a non-orientable surface of genus $g$, we show that the new
surface code constructed on a $\{p, q\}$ tessellation over $S$ has
the ratio $k/n$ better than the ratio of an AQSC constructed on the same
$\{p, q\}$ tessellation over an orientable surface of the same genus $g$.

\vspace{0.2cm}		

{\bf Keywords}. Asymmetric quantum codes, topological quantum codes,
		non-orientable surfaces.
		
\end{abstract}

\section{Introduction}
	
One of the most important class of codes, which includes the Calderbank-Shor-Steane (CSS)
codes \cite{Calderbank:1998}, is the class of stabilizer codes, introduced by Gottesman
\cite{gottesmanstabilizer}. These codes play similar rules of the linear codes in classical
coding theory. In this stabilizer formalism, the construction of quantum codes is performed
by choosing abelian subgroups of the Pauli group that leave invariant the code space.

Among all stabilizer codes, some topological quantum codes stand out. Topological quantum codes
encode quantum words in the nonlocal degrees of freedom of topologically ordered
physical systems. Because of this, these quantum codewords
are intrinsically resistant to effects of noise, as long as it remains local.

The well-known toric codes form an important class of topological quantum
codes, which was introduced by Kitaev \cite{Kitaev}. Since then, such codes
have been widely explored, extended e generalized in many different ways,
\cite{Albuquerque2009}, \cite{Bombin2006Topological},
\cite{Kitaev2}, \cite{delfosse}.

Surface codes are a powerful and promising method for protecting quantum
information from errors. By encoding quantum information into
a two-dimensional tessellation of qubits and using syndrome measurement as
well as logical operations to detect and correct errors, surface codes
offer a way to build large-scale, fault-tolerant quantum computers that
can perform complex computations and simulations. Surface codes are relatively
robust to certain types of errors, such as errors that occur in
clusters or patterns.

Asymmetric quantum error-correcting codes (AQECCs) form a class of quantum
codes that are capable of correcting different types of errors, i.e.,
phase-shift and qubit-flip errors. The combined amplitude damping and dephasing
channel \cite{Sarvepalli:2008} is an example of
a quantum channel that satisfies $d_{z} > d_{x}$, that is, the probability
of occurrence of phase-shift errors is greater than the probability of occurrence
of qubit-flip errors. The idea of correcting asymmetric errors with unequal
probabilities was introduced by Steane \cite{Steane1996Simple}. After such a work,
much research on this type of codes has already been carried out:
detailing construction methods, structures and performance, as can be seen
in \cite{Evans2007}, \cite{Ezerman2011},
\cite{Ioffe2007}, \cite{Guardia2011}, \cite{Guardia2012}, \cite{Guardia2016}.

In an AQECC, the encoding operation is optimized for correcting errors that
occur during transmission, while the decoding operation is optimized
for correcting errors that occur during storage. This approach is
particularly useful in scenarios where the probability of errors occur
during transmission is different from that during storage. For example,
in a quantum communication scenario, errors may be more likely to occur
during transmission due to the presence of noise in the communication channel.

Asymmetric quantum codes can be designed using a variety of techniques,
such as stabilizer and subsystem codes. In case of stabilizer codes,
the encoding and decoding operations are based on the stabilizer formalism,
which involves the utilization of stabilizer generators to
detect and correct errors. In subsystem codes, the encoding and
decoding operations are based on the theory of quantum subsystems,
which involves the division of the system into
smaller subsystems in which the errors can be individually corrected.

The main contribution of this paper is to construct new families of asymmetric quantum
surface codes (AQSCs) over non-orientable surfaces of genus
$g\geq 2$, utilizing tools of hyperbolic geometry. We show that if the genus $g$ of a
non-orientable surface is even $(g=2h)$, then the parameters of the corresponding
AQSC are equal to the parameters of an AQSCs over an orientable surface of
genus $h$ (see Theorem~\ref{teo1}). Furthermore, we also show that if $S$
is a non-orientable surface of genus $g$, then the new surface code constructed
on a $\{p, q\}$ tessellation over $S$ has
ratio $k/n$ better than the ratio of an AQSC derived from the same tessellation over
an orientable surface of the same genus $g$ (see Theorem~\ref{mainteo}).
Asymmetric surfaces codes over orientable
surfaces was presented in \cite{Albuquerque2022}.

This paper is organized as follows. Section~\ref{hyperbolic geometry} presents
a brief review on hyperbolic geometry and the
connection between hyperbolic polygons and two-dimensional surfaces 
(orientable or non-orientable).
In Section~\ref{sec:surface}, we review
the concept of surface codes which, together with hyperbolic geometry tools, provides a
technique to generate new families of asymmetric surface
codes. In Section~\ref{newcodes}, we present the main contribution of this paper: the construction
of new families of AQSCs over non-orientable surfaces. Finally, in Section~\ref{final},
the final remarks are drawn.

\section{Hyperbolic geometry} \label{hyperbolic geometry}

In this section we present definitions and results of hyperbolic geometry that will be necessary
for the development of the article. For more details on hyperbolic geometry, the reader can consult
\cite{Stillwell1995, Beardon2012, Katok1992}.

As a model of the hyperbolic plane, we consider the \textit{upper-half plane}
$\mathbb{H}^2=\{z\in \mathbb{C}: Im(z)>0\}$. A convex closed set formed
by $p'$ geodesic segments is a \emph{hyperbolic polygon} (or a $p'$-gon) $\varPi'$ with edges
$p'$. A $p'$-gon is \textit{regular} if its edges have the same length and the interior
angles are equal. We denote by $\mathbb{S}^2$ the sphere in $\mathbb{R}^3$ and by $\mathbb{E}^2$
the Euclidean plane.

The \emph{unimodular group} $SL(2,\mathbb{R})$ (or $SL_2(\mathbb{R})$) is the group of $2 \times 2$
real matrices with determinant one:

\[SL_2(\mathbb{R})=  \left\{
\begin{pmatrix}
	a & b\\
	c & d
\end{pmatrix}  \biggm| a, b, c, d \in \mathbb{R}, ad-bc =1\} \right \};
\]
and the \emph{linear special projective group} ${\operatorname{PSL}}_2(\mathbb{R})$
is the quotient

\[ PSL_2(\mathbb{R}) =SL_2(\mathbb{R})/\{\pm I_2\}. \]

The elements of $PSL_2(\mathbb{R})$ act on the complex plane by \emph{M\"{o}bius transformations}
$f:\mathbb{C}\cup\{\infty\}\rightarrow\mathbb{C}\cup\{\infty\} $, where
$f(z)=\frac{az+b}{cz+d} $, $ a, b, c, d\in \mathbb{R}$ and $a d - bc = 1$.  This is precisely the set
of M\"{o}bius transformations that preserve the upper half-plane and they are its isometries.
The group $PSL_2(\mathbb{R})$ is the group of conformal automorphisms of the upper
half-plane (see \cite{Katok1992} for more details). From the Riemann mapping theorem,
it is also isomorphic to the group of conformal automorphisms of the unit disc. Thus,
the hyperbolic area of a hyperbolic polygon is also invariant under images of the members of $PSL_2(\mathbb{R})$.
A \emph{Fuchsian group} is a discrete subgroup of $ PSL_2(\mathbb{R})$.

It is well-known that every compact and connected $2$-manifold is
homeomorphic either to $\mathbb{S}^2$, or to a connect sum of one or more
copies of the torus, or a connect sum of one or more copies of the projective
plane. It is also known that these admit polygonal presentations
(see \cite{Stillwell1995}, for instance).

The area of a hyperbolic triangle can be given only by its angles. Moreover, the area
of a hyperbolic triangle and the area
of any regular hyperbolic polygon can be computed from the well-known
Gauss-Bonnet Theorem.

\begin{teo}[Gauss-Bonnet]\label{Gauss-Bonnet}
If $T$ is a hyperbolic triangle with internal angles $\alpha, \beta$ and $\gamma$,
then the area of $T$ is given by
	\begin{eqnarray}
		area(T)=\pi-\alpha-\beta-\gamma \,.
	\end{eqnarray}
\end{teo}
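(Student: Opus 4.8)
The plan is to work throughout in the upper half-plane model $\mathbb{H}^2$ with the hyperbolic area element $dx\,dy/y^{2}$, and to exploit the fact recalled above that the M\"obius maps in $PSL_{2}(\mathbb{R})$ act as isometries preserving hyperbolic area. The strategy has two stages: first establish the formula for a triangle with one ideal vertex (an interior angle equal to $0$), then deduce the general case by a subdivision argument. Throughout I use that the model is conformal, so hyperbolic angles coincide with the Euclidean angles one reads off the picture.

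For the special case, let the triangle have one ideal vertex. Applying a suitable element of $PSL_{2}(\mathbb{R})$ I move that vertex to $\infty$, so the two edges issuing from it become vertical geodesics $x=a$ and $x=b$ with $a<b$, while the third edge is a Euclidean semicircle centred on the real axis; a further dilation and translation (again isometries) normalise this semicircle to the unit semicircle $x^{2}+y^{2}=1$. The triangle is then $\{(x,y): a\le x\le b,\ x^{2}+y^{2}\ge 1\}$, and
\[
\mathrm{area}=\int_{a}^{b}\!\!\int_{\sqrt{1-x^{2}}}^{\infty}\frac{dy\,dx}{y^{2}}=\int_{a}^{b}\frac{dx}{\sqrt{1-x^{2}}}=\arcsin b-\arcsin a .
\]
A short computation of the Euclidean angle between each vertical line and the tangent to the unit circle at the corresponding finite vertex shows that if $\alpha$ and $\beta$ are the two finite interior angles then $a=-\cos\alpha$ and $b=\cos\beta$; hence the integral equals $\arcsin(\cos\beta)-\arcsin(-\cos\alpha)=(\pi/2-\beta)+(\pi/2-\alpha)=\pi-\alpha-\beta$, which is the claimed formula with the ideal angle counted as $0$.

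For a general triangle $ABC$ with interior angles $\alpha,\beta,\gamma$ at $A,B,C$, I reduce to the previous stage. Prolong the side $AB$ past $B$ to the ideal endpoint $\Omega$ of its geodesic, and join $C$ to $\Omega$ by a geodesic segment; this segment lies on the $C$-side of the line $AB$ and cuts the triangle $AC\Omega$ (which has the single ideal vertex $\Omega$) into $ABC$ and the triangle $BC\Omega$ (also with ideal vertex $\Omega$). The interior angle of $BC\Omega$ at $B$ is $\pi-\beta$, the angle of $AC\Omega$ at $A$ is still $\alpha$, and the angles at $C$ of the two pieces sum to $\gamma$; writing $\delta$ for the angle of $BC\Omega$ at $C$ and applying the special case gives $\mathrm{area}(AC\Omega)=\pi-\alpha-(\gamma+\delta)$ and $\mathrm{area}(BC\Omega)=\pi-(\pi-\beta)-\delta=\beta-\delta$, so that $\mathrm{area}(ABC)=\mathrm{area}(AC\Omega)-\mathrm{area}(BC\Omega)=\pi-\alpha-\beta-\gamma$, the $\delta$'s cancelling.

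I expect the main obstacle to be bookkeeping rather than anything conceptual: pinning down the identities $a=-\cos\alpha$, $b=\cos\beta$ in the normalised picture (orientation conventions and the sign of $a$), and verifying in the subdivision step that the configuration is genuinely as described — that $\Omega$ and the segment $C\Omega$ decompose $AC\Omega$ into exactly $ABC$ and $BC\Omega$ — so that the areas combine with the correct signs. A degenerate sub-configuration may force the symmetric choice of which side to prolong, but the computation is otherwise uniform, and one should also note for completeness that $\alpha+\beta+\gamma<\pi$ is a consequence of the formula, which keeps all the $\arcsin$ evaluations in range.
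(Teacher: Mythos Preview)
Your argument is correct and is the classical two-stage proof: compute the area directly for a triangle with one ideal vertex after normalising it to lie over an arc of the unit circle, then reduce the general case by extending one side to the boundary and subtracting. The angle identifications $a=-\cos\alpha$, $b=\cos\beta$ and the bookkeeping in the subdivision are fine as written.

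There is, however, nothing to compare against: the paper does not supply a proof of this theorem. It is quoted as the ``well-known Gauss-Bonnet Theorem'' and stated without argument, with the reader referred to the standard hyperbolic-geometry texts cited earlier in the section (Stillwell, Beardon, Katok). Your proof is precisely the one found in those references, so in that sense it matches the intended background exactly; it simply fills in what the paper deliberately leaves to the literature.
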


A \emph{regular tessellation} of Euclidean or hyperbolic plane is a covering of the plane by
copies of a regular polygon, where the degree of each vertex is the same, such that the
intersection of two faces is either empty, or along entire edges or at vertices, and the interior
of two distinct faces are disjoint. The degree of a vertex is the number of edges meeting
at the vertex. A regular tessellation is denoted by $\{p, q\}$ if $q$ regular $p$-gons meet
at each vertex. We write $\varPi'$ to indicate a regular hyperbolic polygon associated with
a fundamental region of the $ \{p', q'\} $ tessellation and, similarly, $\varPi$ for the
$\{p, q\}$ tessellation.

It can be proved that any compact topological surface $M$ can be realized geometrically
from a polygon $\varPi'$ by \emph{edge-pairing identification}, see \cite{Stillwell1995}.
An edge-pairing identification is an element $\zeta (\neq Id)$ of an isometry group $ \Gamma $
that preserves the orientation, such that $\zeta$ identifies an edge $e$ of $\varPi'$
to another edge $e'$ of $\varPi'$ (i.e., $ \zeta(e)=e' $). Moreover,
$\zeta^{-1}\in\Gamma\setminus Id $ and $ \zeta^{-1}(e')=e $.
In general, if $ \zeta_1(e)=e'$ and $ \zeta_2(e')=e'' $ for some $ \zeta_1,\zeta_2\in\Gamma$,
then $ \zeta(e)=e'' $, where $ \zeta=\zeta_2\circ\zeta_1 $.
Hence, if $\Gamma$ is the Fuchsian group generated by the edge-pairing identifications
$\zeta$ of $\varPi'$, it gives rise to an \textit{identification space} $ S_{\varPi'} $.
This is a hyperbolic surface since $\Gamma$ is a Fuchsian group and $\varPi'$ is a
fundamental region of the $\{p',q'\}$ tessellation. Thus, $ S_\varPi' = \varPi'/\Gamma$
and $M =S_\varPi'$. Recall that given a space and a group acting on it, the
images of a single point under the group action form an orbit of the action. A
\textit{fundamental ~region} \cite{Katok1992} is a subset of the space that contains
exactly one point from each of these orbits. It serves as a geometric realization
for the abstract set of representatives of the orbits. In the context above, an
identification may also exist with vertices. We indicate a maximal set
$ \{v_1,v_2,\dots, v_k\} $ of identified vertices as \textit{vertex cycle}.

We next present some results from \cite{Stillwell1995} which will be utilized in this work.
Propositions \ref{edge_angle_condition} and \ref{poincare}
establish the conditions for a polygon to be a fundamental region.

\begin{prop}[Edge and Angle condition]\label{edge_angle_condition}
Let $\varPi'$ be a compact polygon and let $\Gamma$ be an isometry
group of $\mathbb{S}^2$, $ \mathbb{E}^2 $ or $ \mathbb{H}^2 $ which
preserves orientation, such that $\varPi'$ is a fundamental region for $\Gamma$.
Then one has:
	
\begin{enumerate}
		\item For each edge $s$ in $\varPi'$ there exists exactly one edge $s'(\neq s)$
of $ \varPi $ such that $ s'=\gamma(s) $, $ \gamma\in \Gamma $ (the elements of
$\Gamma $ are called side pairing transformations of $ \varPi' $).
		\item For each set of paired edges of $ \varPi' $, the sum of angles of the
set of identified vertices is $ 2\pi $. This set is a \emph{vertex cycle}.
\end{enumerate}
\end{prop}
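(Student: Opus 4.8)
The plan is to prove Proposition~\ref{edge_angle_condition} by reducing both statements to the defining property of a fundamental region, namely that the $\Gamma$-translates of $\varPi'$ tile the ambient constant-curvature surface (one of $\mathbb{S}^2$, $\mathbb{E}^2$, $\mathbb{H}^2$) with disjoint interiors and cover it. For item (1), I would fix an edge $s$ of $\varPi'$ and consider a point $x$ in the relative interior of $s$. Since $x$ lies on the boundary of $\varPi'$ and the translates of $\varPi'$ cover a neighbourhood of $x$, some translate $\gamma(\varPi')$ with $\gamma \neq \mathrm{Id}$ must meet this neighbourhood in a set of full measure on the other side of $s$; by local finiteness of the tessellation and the fact that $x$ is not a vertex, exactly one such $\gamma$ occurs, and $\gamma^{-1}(s)$ is then forced to be an edge $s'$ of $\varPi'$ with $\gamma(s') = s$ (equivalently, relabelling, $s' = \gamma(s)$ with a possibly different $\gamma$). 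Uniqueness of $s'$ follows because two distinct edges mapped to $s$ would force two translates to overlap with nonempty interior, contradicting the tiling property. One must also check $s' \neq s$: if $\gamma(s) = s$ with $\gamma$ orientation-preserving and $\gamma \neq \mathrm{Id}$, then $\gamma$ fixes the geodesic containing $s$ setwise, and the interior-disjointness of $\varPi'$ and $\gamma(\varPi')$ (which lie on the same side locally) gives a contradiction.

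For item (2), I would argue by following the edge-pairing around a vertex. Start at a vertex $v_1$ of $\varPi'$ with one of the two edges incident to it, say $s_1$; the pairing transformation $\gamma_1$ sends $s_1$ to a paired edge and carries $v_1$ to some identified vertex, and by composing with the transformation that pairs the \emph{next} edge around that vertex one generates the vertex cycle $\{v_1, v_2, \dots, v_k\}$. The composition of all the resulting side-pairing transformations returns to the starting configuration, so it is a rotation about (a representative of) the vertex by the total angle swept; since the translates of $\varPi'$ fit together without gaps or overlaps in a full neighbourhood of that point, this total rotation must be a full turn, i.e.\ the sum of the interior angles of $\varPi'$ at the vertices $v_1, \dots, v_k$ equals $2\pi$. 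Closing up the cycle is the point where one uses that $\Gamma$ is orientation-preserving: the composed transformation is then necessarily an elliptic/rotational element (or the identity) rather than a reflection, so "sum of angles is a multiple of $2\pi$" sharpens to exactly $2\pi$ because the angles are each less than $2\pi$ and the sectors are pairwise interior-disjoint around one point.

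The main obstacle I expect is the bookkeeping in (2): making precise that the edges and vertices around a single orbit point really do close up into one cycle and that the angular sectors contributed by the various translates of $\varPi'$ are pairwise interior-disjoint and jointly exhaust a neighbourhood. This is essentially the standard "developing map" argument for gluing polygons into surfaces, so rather than redoing it from scratch I would cite \cite{Stillwell1995} for the local tiling picture and then extract the numerical conclusion $\sum \text{angles} = 2\pi$ from it; the orientation hypothesis is what rules out the reflection-type identifications that would otherwise allow a half-turn's worth of angle. Item (1) is comparatively routine once the tiling property is invoked, so most of the writeup would be spent setting up the vertex-cycle construction carefully enough that the angle sum falls out.
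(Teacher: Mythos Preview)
The paper does not prove this proposition at all: it is stated as a quotation from \cite{Stillwell1995}, introduced by ``We next present some results from \cite{Stillwell1995} which will be utilized in this work,'' and no argument is given in the text. So there is no ``paper's own proof'' to compare your proposal against.

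Your sketch is essentially the standard textbook argument (and indeed the one behind the Stillwell reference): use the tiling property of a fundamental region to see that exactly one nontrivial translate of $\varPi'$ abuts each edge, and then develop the translates around a vertex to see that the angular sectors fill out a full $2\pi$ neighbourhood. Two minor remarks on the writeup you outline. First, in item~(1) the exclusion $s'\neq s$ is subtler than you indicate: an orientation-preserving isometry \emph{can} map an edge to itself while swapping its endpoints (a half-turn about the midpoint), so one either adopts the convention that such an edge is subdivided at its midpoint into two edges, or one explicitly rules out order-$2$ elliptic elements with fixed point on $\partial\varPi'$. Second, in item~(2) the angle sum at a vertex cycle is in general $2\pi/\nu$, where $\nu$ is the order of the cyclic stabiliser of the cycle; the clean statement $\sum=2\pi$ that the paper quotes tacitly assumes $\Gamma$ acts freely (no elliptic fixed points among the vertices), which is the case relevant to the surface codes in the paper but is not part of the stated hypotheses. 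If you intend to supply a full proof rather than cite the result, you should either add that hypothesis or state the conclusion in the more general form.
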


\begin{prop}[Poincar\'e]\label{poincare}
If a compact polygon $\varPi'$ satisfies the Edge and Angle conditions,
then $\varPi' $ is a fundamental region for the group $ \Gamma $ generated by
the edge-pairing transformations of $ \varPi' $.
\end{prop}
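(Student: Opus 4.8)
The plan is to show that the $\Gamma$-translates of $\varPi'$ form a locally finite tessellation of the ambient model space $X\in\{\mathbb{S}^2,\mathbb{E}^2,\mathbb{H}^2\}$, and then to read off from that tessellation both the discreteness of $\Gamma$ and the fact that $\varPi'$ is a fundamental region; this is the classical Poincar\'e polygon theorem specialized to the orientation-preserving case. First I would fix, for each edge $s$ of $\varPi'$, the side-pairing isometry $\gamma_s\in\Gamma$ supplied by the Edge condition, together with the relations $\gamma_s^{-1}=\gamma_{s'}$. Then I would \emph{unfold} $\varPi'$ across its edges: begin with $\varPi'$ itself, and whenever a copy $g\varPi'$ has been placed and $s$ is an edge of $\varPi'$, adjoin the copy $g\gamma_s^{-1}\varPi'$ along the corresponding edge. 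The core claim is that the copy obtained this way depends only on the group element $g$ and not on the chain of gluings used to reach it, and that the resulting family covers $X$ exactly once.

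Next I would check the local picture at the three kinds of points. At an interior point of a copy there is nothing to verify; at the interior of an edge exactly two copies meet and agree along that edge. At a vertex $v$, the copies through $v$ are indexed by the vertex cycle of $v$, and transporting around the cycle via the successive side-pairings, the Angle condition — that the interior angles along a vertex cycle sum to exactly $2\pi$ — forces the copies to wrap once around $v$ and close up; equivalently, the product of side-pairing transformations along the cycle fixes $v$ and is a rotation by $2\pi$, hence the identity in a neighbourhood of $v$ (here we use that the transformations are orientation preserving). This is the step I expect to be the main obstacle: one must argue that the vanishing of the angle defect around each cycle is not merely necessary but \emph{sufficient} to make the unfolding globally well defined, i.e. that every closed edge-path of copies yields a word in the $\gamma_s$ equal to the identity — combinatorially, that all relations among the side-pairings are consequences of the vertex-cycle relations.

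Granting local consistency, I would finish with a connectedness argument on $X$. Let $U\subseteq X$ be the union of all placed copies; by the local analysis $U$ is open, and it is nonempty. To see that $U$ is closed I would invoke completeness of the constant-curvature metric on $X$ together with compactness of $\varPi'$ (so the copies have uniformly bounded diameter and the emerging tessellation is locally finite): a boundary point of $U$ would be an accumulation point of copies and must therefore lie in one of them. Since $X$ is connected, $U=X$. Local finiteness then gives that distinct copies have disjoint interiors, so $\{g\varPi':g\in\Gamma\}$ is a genuine tessellation of $X$. Consequently $\Gamma$ acts properly discontinuously on $X$ — in particular it is a discrete subgroup of $\mathrm{Isom}^+(X)$, hence a Fuchsian group in the hyperbolic case — and every $\Gamma$-orbit meets $\varPi'$, meeting its interior in exactly one point. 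Therefore $\varPi'$ is a fundamental region for $\Gamma$, and the quotient $\varPi'/\Gamma=S_{\varPi'}$ is precisely the surface obtained from $\varPi'$ by the prescribed edge-pairing identifications. The same scheme applies to the non-orientable surfaces used in the later sections by first passing to the orientation double cover.
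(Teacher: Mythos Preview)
The paper does not supply its own proof of this proposition: it is quoted, together with Proposition~\ref{edge_angle_condition}, as a known result from \cite{Stillwell1995} and used as a black box. So there is no argument in the paper to compare your proposal against.

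That said, what you have written is the standard outline of the Poincar\'e polygon theorem in the torsion-free, orientation-preserving situation relevant here, and it is essentially correct. You have also correctly identified the genuine difficulty: passing from the local closing-up at each vertex cycle (guaranteed by the Angle condition) to the global statement that the developing map from the combinatorial union of copies to $X$ is a covering, i.e.\ that every relation among the $\gamma_s$ is a consequence of the vertex-cycle relations. In a fully rigorous write-up this step is usually handled by constructing the abstract complex of copies first, equipping it with the pull-back metric, showing it is complete (here compactness of $\varPi'$ does the work), and then invoking the fact that a local isometry from a complete space into a simply connected model geometry is a global isometry. Your connectedness/closedness argument gestures at this but would need the developing-map formulation to be airtight; as written, the claim that a boundary point of $U$ ``must lie in one of the copies'' presupposes what you are trying to prove. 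For the purposes of this paper, however, citing \cite{Stillwell1995} or Beardon is entirely appropriate.
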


We know that every non-orientable surface of genus $g$ can be presented by a $2g$-gon
$ \varPi' $. We label the edges of the polygon $ \varPi' $ that must be identified by
the same symbols, and the identification should be made according to the orientation given
in the edge. We label edges with a symbol, say ``$ a $" for a particular orientation,
and ``$ a^{-1} $" for the opposite orientation. Therefore, we can get a surface of genus
$g$ from a $2g$-gon $ \varPi' $, by labelling the edges as $ a_1, a_1 , a_2 ,  a_2 ,
\ldots, a_g , a_g $. We write this polygon, $ \varPi' $ as $ a_1a_1a_2a_2\dots a_g a_g $.
The identification space $ a_1 a_1 a_2 a_2\dots a_g a_g $ of a $ 2g $-gon can be also
generated by diagonal edge pairing identification of the $2g $-gon by applying cut
and paste processes (see \cite{Girondo2012}).
For instance, the cut and paste process for a normalized edge pairing identification
to get the diagonal edge pairing identification, is illustrated in Figure~\ref{cut and paste}
for a non-orientable surface of genus $3$.

\begin{figure}[ht]
	\includegraphics[scale=0.8]{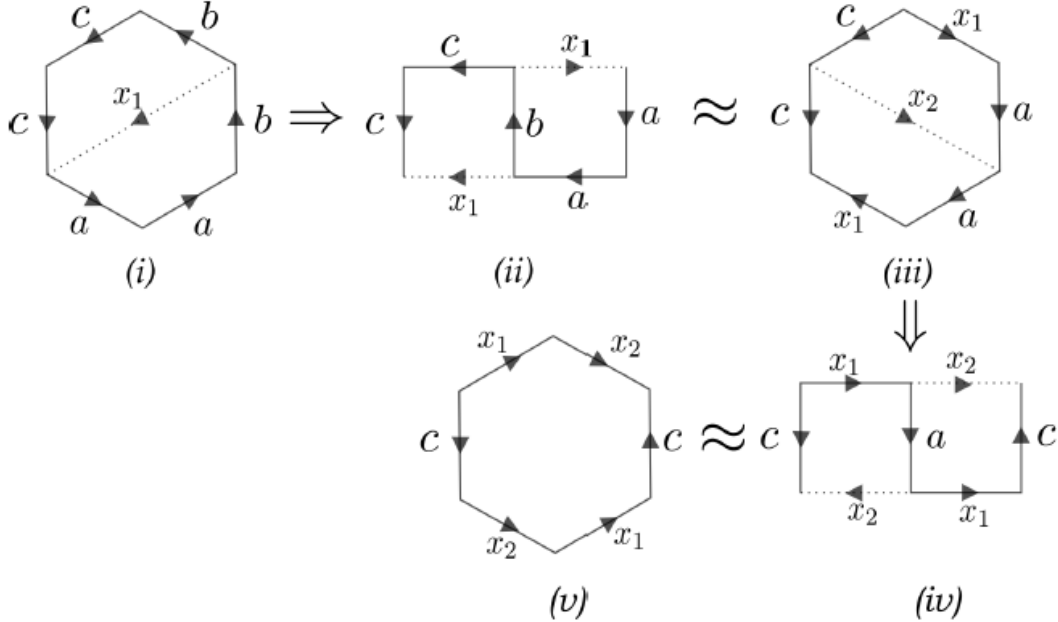}
	\vspace*{10pt}
	\caption{\label{cut and paste}Cut and paste process where in $(i) $ is the given polygon. Cutting $ (i) $ diagonally along $ x_1 $ and pasting along $ b $ gives $ (ii) $, which is isomorphic to $ (iii) $. Cutting $ (iii) $ diagonally along $ x_2 $ and pasting along $ a $ gives $ (iv) $, which is isomorphic to $ (v) $.}
\end{figure}

The method of constructing topological quantum codes utilizes a $2g$-gon $\varPi'$
of $\mathbb{H}^2$. Let $ S=\{s_1,s_2, \ldots, s_{2g}\} $ be the set of edges
of $ \varPi' $. The edge pairing transformation for the polygon $\zeta: S\rightarrow S$
is given by $ \zeta(s_i)=s_{g+i} $, where the sum of the subscripts of $s$ is
performed modulo $2g$. In $S$, $ s_{i} $ is adjacent with $ s_{i-1} $ and $ s_{i+1} $
for all $ i\in\{ 2,3,\dots, 2g-1\}$, $ s_{1}$ is adjacent with $ s_{2g} $ and $ s_{2} $,
and $ s_{2g} $ is adjacent with $s_{2g-1}$ and $ s_{1} $. These identifications produce
a code having maximal hyperbolic distance between the identified edges of $ \varPi' $
with the edge-pairing. For instance, if $ S=\{s_1,s_2, \dots, s_{10}\} $ then
$ \zeta: S\rightarrow S $ satisfy $\zeta(s_1) = s_{6},~ \zeta(s_2) = s_{7},
~\zeta(s_3) = s_{8},~ \zeta(s_4) = s_{9},~ \zeta(s_5) = s_{10}$
(see Fig.~\ref{edge pairing identification}).

\begin{figure}[ht]
	\vspace*{10pt}
		\begin{center}
		\includegraphics[scale=0.6]{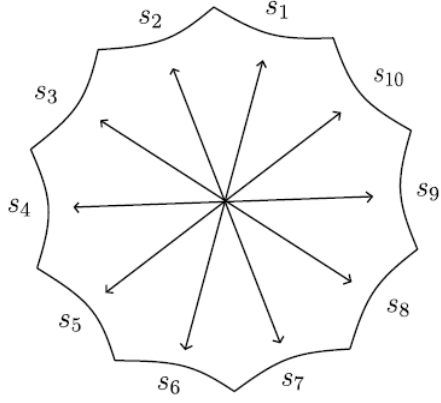}
		
	\end{center}
	\caption{Edge-pairing transformation $ \gamma(s_i)=s_{g+i} $}
	\label{edge pairing identification}
\end{figure}

\emph{Euler characteristic} is an important topological invariant for surfaces. Given
a compact surface $M$ that can be tessellated by a finite number of polygons,
let $V$, $E$ and $F$ be the set of vertices, edges, and faces, respectively, determined
by a tessellation of the surface. The Euler characteristic $\chi(M)$ of $M$
is defined by	
	
\begin{eqnarray}
	\chi(M)=\#V-\#E+\#F \nonumber \,,
\end{eqnarray}
where $ \#V, \#E, \#F $ are the cardinalities of $V, E, F$, respectively.
$\chi(M)$ depends only on the topology of the surface and not on
a particular tessellation. For closed orientable surfaces of genus $h$,
it follows that (see \cite{Stillwell1995})
\begin{eqnarray}
	\chi(M)=2-2h \nonumber \,;
\end{eqnarray}
for closed non-orientable surfaces of genus $g$, one has
\begin{eqnarray}
	\chi(M) = 2-g \nonumber \,.
\end{eqnarray}

We can obtain a relation between the number of elements in a regular tessellation
in a surface. Let $\varPi'$ be a regular hyperbolic polygon associated with the
fundamental region of the $\{p', q'\}$ tessellation, that is, $\varPi'$ is a polygon with
$p'$ edges where $q'$ polygons with $p'$ edges meet in each vertex. If we count
the $q$ edges in each of the $V$ vertices, we have counted each edge of the
tessellation twice. Analogously, if we count all the $p$ edges corresponding
to the border of each of the $F$ faces of the tessellation, we have counted
each edge of the tessellation twice. Therefore, we have $qV = 2E = pF$.
	
\section{Surface codes} \label{sec:surface}
	
A \emph{quantum error-correcting code} (QECC) $Q$ is a $K$-dimensional subspace
of the complex Hilbert space ${\mathbb C}^{q^n}$. If $Q$ has minimum distance
$d$, then $Q$ is an $((n, K, d))_q$ code. If $K = q^k$, we write $[[n, k, d]]_q$. The
length $n$, the dimension $K$ and the minimum distance $d$ are the parameters of $Q$.

In this paper we only deal with $q=2$.
	
In~\cite{Kitaev}, Kitaev proposed a particular case of stabilizer code,
the well-known toric code. Utilizing a $l \times l$ square
as the fundamental polygon $P$ for a torus, Kitaev considered a $\{4, 4\}$ tessellation of $P$.
He associated a qubit to each edge of the corresponding tessellation. At each
vertex $v$ of the tessellation is associated an operator $X_v$ acting as
tensor products of the Pauli matrix $X$ on each edge adjacent to this vertex and
as identity operator in all other edges: \[ X_f = \bigotimes_{i \in E_f} X^i\].
Each face $f$ of the tessellation is associated
with an operator $Z_f$ which acts as the tensor product of Pauli matrix $Z$ on each
edge of the boundary of $f$ and as identity in all other faces:
\[Z_v = \bigotimes_{j \in E_v} Z^j\].

\begin{figure}[!h]
\centering
		\includegraphics{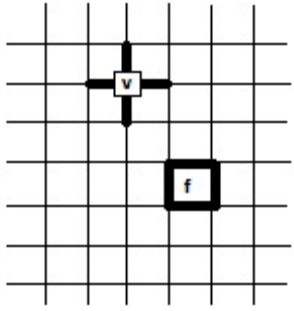}
		\caption{Face (f) e vertice (v) operators of Kitaev's Toric Code act on the highlighted qubits}
		\label{suportekitaev}
	\end{figure}
	
In this way, Kitaev's toric code is then defined by the space stabilized by $X_v$'s and $Z_f$'s:
	
	\[
	\mathcal{C}=\{|\psi \rangle; X_v|\psi\rangle=|\psi\rangle; Z_f|\psi\rangle=|\psi\rangle; \forall v, f \} \,.
	\]
	
The code length is the number of edges of the $\{4, 4\}$ tessellation,
i.e., $n=2l^2$.  The encoded qubits $k$ is the number of representatives of the equivalence classes
of homological non-trivial cycles; hence $k=2$, since the torus has genus $g=1$.
The minimum distance $d$ of a toric code is the minimum among the number of edges
contained in the shortest homological non-trivial cycle in the original
tessellation ($d_x$) and the number of edges contained in the shortest homological
non-trivial cycle in its dual tessellation ($d_z$), that is, $d = \min(d_x, d_z) = l$
(in this particular case, since the $\{4, 4\}$ tessellation is autodual, one has $d_x = d_z$).
	
There exist many types of extensions and generalizations of the toric code
\cite{Albuquerque2009}, \cite{Silva2021}). In \cite{Albuquerque2009}, the
authors worked with codes over orientable surfaces whereas in \cite{Silva2021}, the construction
was performed over non-orientable surfaces.

Applying similar ideas of Kitaev's toric code in order to generate a code,
it is necessary to tile the fundamental polygon $P^\prime$ utilizing
a $\{p, q\}$ tessellation. Let us consider a $g$-torus with fundamental polygon
$P^{\prime}= \{4g, 4g\}$, tessellated by a polygon $P= \{p, q\}$.
To make this tessellation possible,
the quotient between the area $P^{\prime}$ and $P$ must be a positive integer. If
$\mu(M)$ is the area of a polygon $M$, the number of faces $n_f$ can be computed by
\begin{equation}
    n_f = \frac{\mu({P^{\prime}})}{\mu({P})} = \frac{4q(g-1)}{pq-2p-2q}.
\end{equation}

The code length is given by
\begin{equation} \label{parn}
    n=\frac{p}{2} n_f.
\end{equation}

As was said previously, the encoded qubits $k$ is the number of homological
non-trivial cycles belonging to distinct equivalence classes of homology,
which implies $k=2g$. In order to compute the minimum
distance of the code, one must find out the hyperbolic distance $d_h$ between
two opposite sides of the fundamental
polygon $P^{\prime}$. Thus, a homologically non-trivial cycle has, at least, a
number of edges greater than $d_h$.

\begin{figure}[H]
\centering
		\includegraphics[scale=0.6]{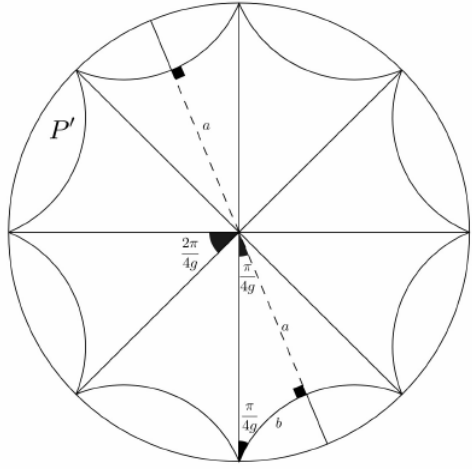}
		\caption{The hiperbolic distance between two opposite sides of a polygon $P^{\prime}$ is $d_h = 2a$}
		\label{figdh}
	\end{figure}

From hyperbolic trigonometry (see Fig.~(\ref{figdh}), it follows that
\begin{equation}
    d_h = 2arccos\frac{\cos(\pi/4g)}{\sin(\pi/4g)},
\end{equation}

and the length $l(p, q)$ of an edge of a $\{p, q\}$ polygon is given by
\begin{equation}
    l(p,q) = arccos\Bigg[\frac{\cos^2(\pi/q) + \cos(2\pi/p)}{\sin^2(\pi/q)}\Bigg].
\end{equation}

Hence,
\begin{equation}
    d_x = \frac{d_h}{l(p,q)} \hbox{ and } d_z = \frac{d_h}{l(q,p)},
\end{equation}

which implies

\begin{equation} \label{dist}
    d= \lceil{\min(d_x, d_z)}\rceil.
\end{equation}

\section{Asymmetric surface codes}\label{newcodes}
	
In this section, new families of AQSCs obtained from a
non-orientable surface are constructed. As it is usual,
we assume that $d_X$ is utilized to correct $X$ errors and $d_z$
to correct $Z$ errors.
	
\begin{df}\cite{Guardia2020} An asymmetric quantum error correction code
(AQECC) with parameters
$((n, K, d_z/d_x ))$ is a $K$-dimensional subspace of the complex Hilbert space
$\C^{2^{n}}$. The code corrects all qubit-flip errors up to $\frac{d_x-1}{2}$
and all phase-shift errors up to $\frac{d_z-1}{2}$. An $((n, 2^k, d_z/d_x ))$
code is denoted by $[[n, k, d_z/d_x ]]$.
\end{df}
	
We here construct families of AQSCs derived from non-orientable surfaces by
applying similar approaches utilized in \cite{Albuquerque2022} and \cite{Silva2021}.

\begin{teo}\label{teo1}
If $S_g$ is a non-orientable surface of even genus $g = 2h$ then an AQSC
generated by a $\{p, q\}$ tessellation has the same parameters of an AQSC
generated by the same $\{p, q\}$ tessellation over an orientable surface $S_h$ of genus $h$.
\end{teo}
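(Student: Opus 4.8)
The plan is to compare, side by side, the five quantities that determine an AQSC's parameters — the face count $n_f$, the length $n$, the dimension $k$, and the two distances $d_x,d_z$ — as computed on a non-orientable surface $S_g$ with $g=2h$ versus an orientable surface $S_h$, both carrying the same $\{p,q\}$ tessellation. Since all the relevant formulas in Section~\ref{sec:surface} are expressed purely in terms of the Euler characteristic of the ambient surface (through the area ratio $\mu(P')/\mu(P)$) and in terms of $p,q$, the strategy is to show that $S_{2h}$ and $S_h$ have the same Euler characteristic, and then observe that every parameter formula factors through $\chi$ together with data ($p$, $q$, and the hyperbolic geometry of the fundamental polygon) that is identical in both cases.

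First I would record $\chi(S_g) = 2-g$ for the non-orientable surface and $\chi(S_h)=2-2h$ for the orientable one, both quoted from the excerpt; substituting $g=2h$ gives $\chi(S_{2h}) = 2-2h = \chi(S_h)$. Next I would use the relation $q\,\#V = 2\,\#E = p\,\#F$ together with $\chi = \#V-\#E+\#F$ to solve for $\#F = n_f$ in terms of $\chi$, $p$, $q$ alone; since $\chi$, $p$, $q$ agree for $S_{2h}$ and $S_h$, the number of faces $n_f$ agrees, and hence so does $n=\tfrac{p}{2}n_f$ by \eqref{parn}. Equivalently one can just note that the area-ratio expression $n_f = \mu(P')/\mu(P)$ depends only on $\chi$ and on the interior angles of $P$, which are fixed by $p,q$ via Gauss--Bonnet; this gives a clean cross-check. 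For the dimension $k$, one uses $k = \dim H_1$; for a closed surface this equals $2-\chi$ regardless of orientability (first Betti number over $\mathbb{Z}_2$, which is what counts for the $\mathbb{Z}_2$-homological cycles defining the logical operators), so $k = 2-\chi$ is again the same on both surfaces — this reconciles the ``$k=2g$'' and ``$k=2h$'' statements, since $2g = 4h$ would be wrong for the orientable picture but $2-\chi(S_{2h}) = 2h = 2h = 2-\chi(S_h)$ matches. Finally, for the distances: $d_h$, $l(p,q)$ and $l(q,p)$ are functions of $p,q$ and of the angle $\pi/4g'$ where $2g'$ is the number of edges of the fundamental polygon $P'$; since a non-orientable surface of genus $2h$ and an orientable surface of genus $h$ are both presented by fundamental polygons with the same number of edges (namely $4h$, after the diagonal edge-pairing normalization described around Figure~\ref{edge pairing identification}), the arguments of the trigonometric expressions coincide, so $d_x$ and $d_z$ — and therefore $d = \lceil \min(d_x,d_z)\rceil$ — agree.

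The step I expect to require the most care is the claim that the two surfaces admit fundamental polygons with the same number of edges, and correspondingly the same hyperbolic distance $d_h$ between paired sides. Here I would invoke the discussion preceding the theorem: a non-orientable surface of genus $g$ is presented by a $2g$-gon with word $a_1a_1a_2a_2\cdots a_ga_g$, and via the cut-and-paste process (Figure~\ref{cut and paste}) this is equivalent to a diagonal edge-pairing of the same $2g$-gon; with $g=2h$ this is a $4h$-gon, exactly the edge count of the fundamental $\{4h,4h\}$-polygon of the orientable genus-$h$ surface. One must check that the edge-pairing map $\zeta(s_i)=s_{g+i}$ of the excerpt, applied to this $4h$-gon, produces the same geometry of opposite-side identification — hence the same $d_h = 2\,\mathrm{arccos}\!\big(\cos(\pi/4g)/\sin(\pi/4g)\big)$ with $g$ replaced appropriately — in both the orientable and the non-orientable case, the only difference being the orientation-reversing versus orientation-preserving nature of the gluings, which does not affect lengths or areas since $PSL_2(\mathbb{R})$-images (and their orientation-reversing counterparts) preserve hyperbolic distance. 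Once this geometric identification is pinned down, the equality of all five parameters, and hence of the AQSC parameters $[[n,k,d_z/d_x]]$, follows immediately.
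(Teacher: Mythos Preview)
Your proposal is correct and follows essentially the same route as the paper: both arguments observe that the non-orientable surface of genus $g=2h$ and the orientable surface of genus $h$ share the same Euler characteristic and the same fundamental polygon (a $4h$-gon), and then read off equality of $n_f$, $n$, $k$, $d_x$, $d_z$ from the formulas of Section~\ref{sec:surface}. Your write-up is in fact more explicit than the paper's on why $k=2-\chi$ holds in the non-orientable case and on why the orientation-reversing identifications do not disturb the metric data; the paper simply asserts these points.
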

	
\demo 	
The fundamental region of the orientable surface $S_h$ is the polygon $\{4h, 4h\}$, and
the fundamental region of a non-orientable surface $S_g$ is the polygon
$\{2g, 2g\}$. Therefore, $h=2g$; hence, the fundamental region of both surfaces are
the same $P^{\prime} = \{2g, 2g\}$ polygon.

The code length $n$ is computed by $n=n_f\frac{p}{2}$.
But $n_f = \frac{\mu(P^{\prime})}{\mu(P)}$, where $P = \{p, q\}$. However,
since $p$ and $n_f$ are equal for these two cases, then $n$ is also equal for
the corresponding two codes.
	
In both cases -- orientable and non-orientable surfaces--, $k$ can be computed by
$k=2- \mathcal{X}$, where $\mathcal{X}$ is the Euler characteristic
of the surface. For an orientable surface $S_h$, we have $\mathcal{X}(S_h)=2-2h$,
and for a non-orientable surface $S_g$, one has $\mathcal{X}(S_g)=2-g$.
Then $k = g = 2h$ are equal for these two codes.
	
Finally, let us compute the corresponding distances. The computations of $d_x$ and $d_z$,
as shown in \ref{dist}, depend only on three magnitudes: $d_h$ (hyperbolic
distance between opposite edge-pairings of the fundamental region),
$l(p, q)$ (the edge length of the $\{p,q\}$ polygon) and $l(q, p)$ (the edge length of
the $\{q, p\}$ polygon). Since $d_x = \frac{d_h}{l(p,q)}$ and $d_z = \frac{d_h}{l(q,p)}$,
and both -- fundamental polygon and tessellation polygon-- are the same for these
two codes, it follows that the distances are also equal.
This concludes the proof.\fimh
	
In Theorem~\ref{teo1}, we consider non-orientable surfaces of even genus. However,
when considering a surface of an arbitrary genus we have the following result.

\begin{teo}\label{mainteo}
Let $S$ be a non-orientable surface of genus $g$. Then an AQSC
generated by a $\{p, q\}$ tessellation over $S$ has ratio $k/n$
better than the ratio of an AQSC generated by the same $\{p, q\}$ tessellation over an
orientable surface of the same genus $g$.
\end{teo}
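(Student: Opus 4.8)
The plan is to make both rates $k/n$ completely explicit and then compare the two resulting fractions directly. Since the minimum distances $d_x,d_z$ do not appear in $k/n$, only the counts $k$ and $n$ matter, and these are governed by the Euler characteristic together with the area identity $n_f=\mu(P^{\prime})/\mu(P)$ and \eqref{parn}, much as in the proof of Theorem~\ref{teo1}.

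First I would treat the orientable surface of genus $g$. Its fundamental region is the regular hyperbolic $\{4g,4g\}$ polygon, and the Gauss-Bonnet Theorem (Theorem~\ref{Gauss-Bonnet}) gives $\mu(\{4g,4g\})=(4g-2)\pi-4g\cdot\frac{2\pi}{4g}=4\pi(g-1)$, while the tile $\{p,q\}$ has area $\mu(\{p,q\})=\pi\,\frac{pq-2p-2q}{q}$. Hence $n_f=\frac{4q(g-1)}{pq-2p-2q}$ and, by \eqref{parn}, $n_{\mathrm{or}}=\frac{p}{2}\,n_f=\frac{2pq(g-1)}{pq-2p-2q}$. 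Together with $k_{\mathrm{or}}=2g$ (from the Euler characteristic $2-2g$ of this surface), this gives
\[
\frac{k_{\mathrm{or}}}{n_{\mathrm{or}}}=\frac{g\,(pq-2p-2q)}{pq\,(g-1)}.
\]

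Next I would redo the computation for the non-orientable surface $S$ of genus $g$, whose fundamental region is the $\{2g,2g\}$ polygon. By Gauss-Bonnet once more, $\mu(\{2g,2g\})=(2g-2)\pi-2g\cdot\frac{\pi}{g}=2\pi(g-2)$, so $n_f=\frac{2q(g-2)}{pq-2p-2q}$ and $n_{\mathrm{non}}=\frac{pq(g-2)}{pq-2p-2q}$; and since $\chi(S)=2-g$ forces $k_{\mathrm{non}}=g$, this produces
\[
\frac{k_{\mathrm{non}}}{n_{\mathrm{non}}}=\frac{g\,(pq-2p-2q)}{pq\,(g-2)}.
\]
Finally I would compare the two displayed fractions. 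A $\{p,q\}$ tessellation of $\mathbb{H}^2$ satisfies $\frac1p+\frac1q<\frac12$, equivalently $pq-2p-2q>0$, so both fractions share the same positive numerator $g(pq-2p-2q)$ and the same positive factor $pq$ in their denominators; since $g-2<g-1$, this forces $k_{\mathrm{non}}/n_{\mathrm{non}}>k_{\mathrm{or}}/n_{\mathrm{or}}$, which is exactly the assertion.

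I do not expect a genuine obstacle: the whole argument is a Gauss-Bonnet area evaluation followed by a one-line inequality, and the only point requiring care is getting the two fundamental-polygon areas right. Two caveats should nonetheless be recorded: (i) the construction is genuinely hyperbolic only for $g\ge 3$ --- for $g=2$ the non-orientable fundamental polygon $\{4,4\}$ is Euclidean and lies outside this scheme, needing a separate remark; and (ii) for the two codes to exist one needs $n_f$ to be a positive integer, a divisibility condition on $(p,q,g)$ that restricts the admissible family but has no bearing on the displayed inequality.
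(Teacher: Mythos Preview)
Your argument is correct and matches the paper's own proof essentially line for line: the paper also writes $k_1=2g$, $n_1=\frac{2pq(g-1)}{pq-2p-2q}$ for the orientable case and $k_2=g$, $n_2=\frac{pq(g-2)}{pq-2p-2q}$ for the non-orientable case, then compares the two ratios (there by forming $r_1/r_2=(g-2)/(g-1)<1$, which is equivalent to your denominator comparison). Your derivation of these counts via Gauss--Bonnet and your caveats about $g\ge 3$ and the integrality of $n_f$ are more explicit than what the paper records, but the approach is the same.
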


\begin{proof}
Let $S_1$ be an orientable surface and $S_2$ be a non-orientable surface,
both with the same genus $g$. We want to prove that the ratio
$r_1=k_1/n_1$ of the AQSC $C_1$ derived from the $\{p, q\}$ tessellation over $S_1$
is lower than the ratio $r_2=k_2/n_2$ of the AQSC $C_2$ derived from the
$\{p, q\}$ tessellation over $S_2$.

For $C_1$ we have: $k_1=2g$ and $n_1=\frac{2pq(g-1)}{pq-2p-2q}$; thus,
\begin{equation}
	r_1 = \frac{g(pq-2p-2q)}{pq(g-1)}.
\end{equation}
On the other hand, for $C_2$, it follows that $k_2=g$ and $n_2 =\frac{pq(g-2)}{pq-2p-2q}$,
which implies
\begin{equation}
	r_2 = \frac{g(pq-2p-2q)}{pq(g-2)}.
\end{equation}

Therefore,
\begin{equation}
	q =\frac{r_1}{r_2}= \frac{\frac{g(pq-2p-2q)}{pq(g-1)}}{\frac{g(pq-2p-2q)}{pq(g-2)}} =
\frac{g-2}{g-1} < 1, \forall g \geq 2,
\end{equation}
and the result follows.

\end{proof}

In the sequence, we exhibit the parameters of some new AQSCs obtained from non-orientable
surfaces of odd genus.
\begin{center}
	\begin{longtable}[c]{|c|c|c|c|c|}
		\caption{Parameters of AQSCs from non-orientable surface of genus $g=5$, $d_h \approx 3.5796$}\\
		\hline
		$\{p,q\}$ & $n_f$ & $ l(p,q) $ & $ [[n,k,d_z/d_x]] $ \\
		\hline
		\endfirsthead
		\multicolumn{4}{c}%
		{\tablename\ \thetable\ -- \textit{Continued from previous page}} \\
		\hline
		$\{p,q\}$ & $n_f$ & $ l(p,q) $ & $ [[n,k,d_z/d_x]] $  \\
		\hline
		\endhead
		\hline \multicolumn{5}{r}{\textit{Continued on next page}} \\
		\endfoot
		\hline
		\endlastfoot
		\hline\hline
		\hline
		$\{p,q\}$ & $n_f$ & $ l(p,q) $ & $ [[n,k,d_z/d_x]] $ \\ [0.5ex]
		\hline\hline
		$ \{3,7\} $ &     $ 42 $ &   $ 1.0905 $ & \\
			$ \{7,3\} $ &  $ 18 $ &    $ 0.5663 $  & $ [[63,5,7/4]] $ \\
			\hline\hline
			$ \{3,8\} $ &   $ 24 $ &    $ 1.5286 $ & \\
			$ \{8,3\} $ &   $ 9 $  &  $ 0.7270 $ & $ [[36,5,5/3]] $\\
			\hline\hline
			$ \{3,9\} $ &  $ 18 $ & $ 1.8551 $  & \\
			$ \{9,3\} $ &     $ 6 $ &   $ 0.8192 $   & $ [[27,5,5/2]] $\\
			\hline\hline
			$ \{3,12\} $ &   $ 12 $ &   $ 2.5534 $  &  \\
			$ \{12,3\} $ &    $ 3 $ &   $ 0.9516 $  & $ [[18,5,4/2]] $ \\
			\hline\hline
			$ \{3,15\} $ &   $ 10 $ &   $ 3.0486 $ & \\
			$ \{15,3\} $ &   $ 2 $ &   $ 1.0070 $ & $ [[15,5,4/2]] $\\
			\hline\hline
			$ \{4,5\} $ &   $ 15 $ &   $ 1.2537 $  & \\
			$ \{5,4\} $ &  $ 12 $  &  $ 1.0613 $   & $ [[30,5,4/3]] $ \\
			\hline\hline
			$ \{4,7\} $ &   $ 7 $  &  $ 2.1408 $  & \\
			$ \{7,4\} $ &   $ 4 $  &  $ 1.4491 $  & $ [[14,5,3/2]] $ \\
			\hline\hline
			$ \{4,8\} $  &  $ 6 $ &   $ 2.4485 $  & \\
			$ \{8,4\} $ &    $ 3 $ &   $ 1.5286 $  & $ [[12,5,3/2]] $\\
			\hline\hline
			$ \{4,10\} $ &    $ 5 $ &    $ 2.9387 $  & \\
			$ \{10,4\} $ &    $ 2 $ &   $ 1.6169 $  & $ [[10,5,3/2]] $\\
		\hline	
		
	\end{longtable}
\end{center}


\begin{center}
	\begin{longtable}[c]{|c|c|c|c|c|}
		\caption{Parameters of AQSCs from non-orientable surfaces of genus $g=7$, $d_h \approx 4.3144$}\\
		\hline
		$\{p,q\}$ & $n_f$ & $ l(p,q) $ & $ [[n,k,d_z/d_x]] $ \\
		\hline
		\endfirsthead
		\multicolumn{4}{c}%
		{\tablename\ \thetable\ -- \textit{Continued from previous page}} \\
		\hline
		$\{p,q\}$ & $n_f$ & $ l(p,q) $ & $ [[n,k,d_z/d_x]] $  \\
		\hline
		\endhead
		\hline \multicolumn{5}{r}{\textit{Continued on next page}} \\
		\endfoot
		\hline
		\endlastfoot
		\hline\hline
		\hline
		$\{p,q\}$ & $n_f$ & $ l(p,q) $ & $ [[n,k,d_z/d_x]] $ \\ [0.5ex]
		\hline\hline
  $ \{3,7\} $ &   $ 70 $ &   $ 1.0905 $  & \\
			$ \{7,3\} $ &  $ 30 $ &   $ 0.5663 $  & $ [[105,7,8/4]] $ \\
			\hline\hline
			$ \{3,8\} $ &   $ 40 $ &   $ 1.5286 $  & \\
			$ \{8,3\} $ &  $ 15 $ &   $ 0.7270 $  & $ [[60,7,6/3]] $ \\
			\hline\hline
			$ \{3,9\} $ &  $ 30 $ &   $ 1.8551 $  & \\
			$ \{9,3\} $ &   $ 10 $ &   $ 0.8192 $  & $ [[45,7,6/3]] $ \\
			\hline\hline
			$ \{3,11\} $ &  $ 22 $ &    $ 2.3517 $  & \\
			$ \{11,3\} $ &   $ 6 $ &   $ 0.9210 $ & $ [[33,7,5/2]] $\\
			\hline\hline
			$ \{3,12\} $ &  $ 20 $ &   $ 2.5534 $  & \\
			$ \{12,3\} $ &   $ 5 $ &   $ 0.9516 $  & $ [[30,7,5/2]] $\\
			\hline\hline
			$ \{3,16\} $ &  $ 16 $ &   $ 3.1877 $  & \\
			$ \{16,3\} $ &   $ 3 $  &  $ 1.0186 $  & $ [[24,7,5/2]] $\\
			\hline\hline
			$ \{3,21\} $ &  $ 14 $  &  $ 3.7611 $  & \\
			$ \{21,3\} $ &   $ 2 $  &  $ 1.0529 $  & $ [[21,7,4/2]] $ \\
			\hline\hline
			$ \{4,5\} $ &  $ 25 $ &   $ 1.2537 $   & \\
			$ \{5,4\} $ &  $ 20 $ &   $ 1.0613 $  & $ [[50,7,5/4]] $\\
			\hline\hline
			$ \{4,6\} $ &  $ 15 $ &   $ 1.7627 $  & \\
			$ \{6,4\} $ &  $ 10 $ &   $ 1.3170 $  & $ [[30,7,4/3]] $ \\
			\hline\hline
			$ \{4,8\} $ &  $ 10 $ &   $ 2.4485 $  & \\
			$ \{8,4\} $ &   $ 5 $ &   $ 1.5286 $  & $ [[20,7,3/2]] $\\
			\hline\hline
			$ \{4,9\} $ &   $ 9 $ &   $ 2.7101 $ &   \\
			$ \{9,4\} $ &   $ 4 $ &   $ 1.5807 $ & $ [[18,7,3/2]] $ \\
			\hline\hline
			$ \{4,14\} $ &   $ 7 $ &   $ 3.6472 $ & \\
			$ \{14,4\} $ &   $ 2 $ &   $ 1.6900 $  & $ [[14,7,3/2]] $\\
             \hline	
			\end{longtable}
              \end{center}

\begin{center}
	\begin{longtable}[c]{|c|c|c|c|c|}
		\caption{Parameters of AQSCs from non-orientable surfaces with $g=9$, $d_h \approx 4.8414$}\\
		\hline
		$\{p,q\}$ & $n_f$ & $ l(p,q) $ & $ [[n,k,d_z/d_x]] $ \\
		\hline
		\endfirsthead
		\multicolumn{4}{c}%
		{\tablename\ \thetable\ -- \textit{Continued from previous page}} \\
		\hline
		$\{p,q\}$ & $n_f$ & $ l(p,q) $ & $ [[n,k,d_z/d_x]] $  \\
		\hline
		\endhead
		\hline \multicolumn{5}{r}{\textit{Continued on next page}} \\
		\endfoot
		\hline
		\endlastfoot
		\hline\hline
		\hline
		$\{p,q\}$ & $n_f$ & $ l(p,q) $ & $ [[n,k,d_z/d_x]] $ \\ [0.5ex]
		\hline\hline
$\{p,q\}$ & $n_f$ & $ l(p,q) $ & $ [[n,k,d_z/d_x]] $   \\ \hline \hline
			$ \{3,7\} $ &   $ 98 $ &   $ 1.0905 $&\\
			$ \{7,3\} $ &   $ 42 $ &   $ 0.5663 $ &  $ [[147,9,9/5]] $ \\
			\hline\hline
			$ \{3,8\} $ &  $ 56 $ &   $ 1.5286 $  &\\
			$ \{8,3\} $ &   $ 21 $ &    $ 0.7270 $ & $ [[84,9,7/4]] $\\
			\hline\hline
			$ \{3,9\} $ &   $ 42 $ &   $ 1.8551 $  & \\
			$ \{9,3\} $ &  $ 14 $  &  $ 0.8192 $  & $ [[63,9,6/3]] $\\
			\hline\hline
			$ \{3,12\} $ &   $ 28 $ &   $ 2.5534 $  & \\
			$ \{12,3\} $ &    $ 7 $ &    $ 0.9516 $  & $ [[42,9,6/2]] $ \\
			\hline\hline
			$ \{3,13\} $ &  $ 26 $ &   $ 2.7341 $  & \\
			$ \{13,3\} $ &    $ 6 $ &   $ 0.9748 $  & $ [[39,9,5/2]] $\\
			\hline\hline
			$ \{3,20\} $ &   $ 20 $ &   $ 3.6594 $  & \\
			$ \{20,3\} $ &    $ 3 $ &   $ 1.0481 $  & $ [[30,9,5/2]] $\\
			\hline\hline
			$ \{3,27\} $ &  $ 18 $ &   $ 4.2792 $  & \\
			$ \{27,3\} $ &    $ 2 $&    $ 1.0712 $ & $ [27,9,5/2]] $\\
			\hline\hline
			$ \{4,5\} $ &  $ 35 $ &   $ 1.2537 $  & \\
			$ \{5,4\} $ &   $ 28 $ &   $ 1.0613 $  & $ [[70,9,5/4]] $\\
			\hline\hline
			$ \{4,6\} $ &  $ 21 $ &   $ 1.7627 $  & \\
			$ \{6,4\} $ &  $ 14 $ &   $ 1.3170 $  & $ [[42,9,4/3]] $ \\
			\hline\hline
			$ \{4,8\} $ &   $ 14 $ &   $ 2.4485 $  & \\
			$ \{8,4\} $ &   $ 7 $  &  $ 1.5286 $  & $ [[28,9,4/2]] $\\
			\hline\hline
			$ \{4,11\} $ &   $ 11 $ &   $ 3.1422 $  & \\
			$ \{11,4\} $ &   $ 4 $ &   $ 1.6432 $  & $ [[22,9,3/2]] $\\
			\hline\hline
			$ \{4,18\} $ &    $ 9 $ &   $ 4.1637 $ & \\
			$ \{18,4\} $ &    $ 2 $ &   $ 1.7191 $  & $ [[18,9,3/2]] $\\
			\hline\hline
			$ \{5,8\} $ &    $ 8 $ &    $ 2.7609 $  & \\
			$ \{8,5\} $ &   $ 5 $ &   $ 2.0481 $  &$ [[20,9,3/2]] $\\
			\hline\hline
			$ \{5,15\} $ &    $ 6 $ &   $ 4.0698 $  & \\
			$ \{15,5\} $ &   $ 2 $ &   $ 2.1934 $  & $ [[15,9,3/2]] $\\
			\hline
\end{longtable}
\end{center}

\begin{center}
	\begin{longtable}[c]{|c|c|c|c|c|}
		\caption{Parameters of AQSCs from non-orientable surfaces with $g=11$, $d_h \approx 5.2548$}\\
		\hline
		$\{p,q\}$ & $n_f$ & $ l(p,q) $ & $ [[n,k,d_z/d_x]] $ \\
		\hline
		\endfirsthead
		\multicolumn{4}{c}%
		{\tablename\ \thetable\ -- \textit{Continued from previous page}} \\
		\hline
		$\{p,q\}$ & $n_f$ & $ l(p,q) $ & $ [[n,k,d_z/d_x]] $  \\
		\hline
		\endhead
		\hline \multicolumn{5}{r}{\textit{Continued on next page}} \\
		\endfoot
		\hline
		\endlastfoot
		\hline\hline
		\hline
		$\{p,q\}$ & $n_f$ & $ l(p,q) $ & $ [[n,k,d_z/d_x]] $ \\ [0.5ex]
		\hline\hline
$\{p,q\}$ & $n_f$ & $ l(p,q) $ & $ [[n,k,d_z/d_x]] $   \\ \hline \hline
$\{p,q\}$ & $n_f$ & $ l(p,q) $ & $ [[n,k,d_z/d_x]] $   \\ \hline \hline
			$ \{3,7\} $ & $ 126 $ &   $ 1.0905 $  & \\
			$ \{7,3\} $ &  $ 54 $ &   $ 0.5663 $  &$ [[189,11,10/5]] $\\
			\hline\hline
			$ \{3,8\} $ &  $ 72 $ &   $ 1.5286 $  & \\
			$ \{8,3\} $ &  $ 27 $ &   $ 0.7270 $  & $ [[108,11,8/4]] $ \\
			\hline\hline
			$ \{3,9\} $ &  $ 54 $ &   $ 1.8551 $  & \\
			$ \{9,3\} $ &  $ 18 $ &   $ 0.8192 $  & $ [[81,11,7/3]] $ \\
			\hline\hline
			$ \{3,12\} $ &  $ 36 $ &   $ 2.5534 $  & \\
			$ \{12,3\} $ &    $ 9 $ &   $ 0.9516 $  & $ [[54,11,6/3]] $ \\
			\hline\hline
			$ \{3,15\} $ &   $ 30 $ &    $ 3.0486 $  & \\
			$ \{15,3\} $ &    $ 6 $ &   $ 1.0070 $  & $ [[45,11,6/2]] $ \\
			\hline\hline
			$ \{3,24\} $ &   $ 24 $ &   $ 4.0374 $  & \\
			$ \{24,3\} $ &    $ 3 $ &   $ 1.0638 $  & $ [[36,11,5/2]] $ \\
			\hline\hline
			$ \{3,33\} $ &  $ 22 $ &   $ 4.6883 $  & \\
			$ \{33,3\} $ &   $ 2 $ &   $ 1.0803 $  & $ [[33,11,5/2]] $ \\
			\hline\hline
			$ \{4,6\} $ &   $ 27 $ &   $ 1.7627 $  & \\
			$ \{6,4\} $ &  $ 18 $  &  $ 1.3170 $  & $ [[54,11,4/3]] $ \\
			\hline\hline
			$ \{4,7\} $ &  $ 21 $  &  $ 2.1408 $  & \\
			$ \{7,4\} $  &  $ 12 $ &   $ 1.4491 $  & $ [[42,11,4/3]] $ \\
			\hline\hline
			$ \{4,8\} $ &   $ 18 $ &    $ 2.4485 $  & \\
			$ \{8,4\} $ &    $ 9 $ &   $ 1.5286 $  & $ [[36,11,4/3]] $ \\
			\hline\hline
			$ \{4,10\} $ &   $ 15 $ &   $ 2.9387 $   & \\
			$ \{10,4\} $ &   $ 6 $ &   $ 1.6169 $  & $ [[30,11,4/2]] $ \\
			\hline\hline
			$ \{4,13\} $ &  $ 13 $ &   $ 3.4932 $   & \\
			$ \{13,4\} $ &   $ 4 $ &   $ 1.6780 $   & $ [[26,11,4/2]] $ \\
			\hline\hline
			$ \{4,16\} $ &  $ 12 $ &   $ 3.9225 $  & \\
			$ \{16,4\} $ &   $ 3 $ &   $ 1.7073 $  & $ [[24,11,4/2]] $ \\
			\hline\hline
			$ \{4,22\} $ &   $ 11 $ &   $ 4.5720 $ & \\
			$ \{22,4\} $ &   $ 2 $  &  $ 1.7337 $  & $ [[22,11,4/2]] $ \\
			\hline\hline
			$ \{6,12\} $ &    $ 6 $ &    $ 3.7556 $  & \\
			$ \{12,6\} $ &   $ 3 $ &   $ 2.5534 $  & $ [[18,11,3/2]] $\\
			\hline
\end{longtable}
\end{center}

\begin{table}[H]
\centering
\label{families}
\begin{tabular}{|l|l|l|}
			\hline
			$\{p,q\}$ & $n_f$ & $ [[n,k,d_z/d_x]] $   \\ \hline \hline
			$ \{7,3\} $ & $ 6(g-2) $ &   $ [[21(g-2), g, d_z/d_x]] $ \\ \hline			
			$ \{8,3\} $ & $ 3(g-2) $ &   $ [[12(g-2), g, d_z/d_x]] $ \\ \hline
			$ \{9,3\} $ & $ 2(g-2) $ &   $ [[9(g-2), g, d_z/d_x]] $ \\ \hline
			$ \{12,3\} $ & $ (g-2) $ &   $ [[6(g-2), g, d_z/d_x]] $ \\ \hline
			$ \{5,4\} $ & $ 4(g-2) $ &   $ [[10(g-2), g, d_z/d_x]] $ \\ \hline
			$ \{6,4\} $ & $ 2(g-2) $ &   $ [[6(g-2), g, d_z/d_x]] $ \\ \hline
			$ \{8,4\} $ & $ (g-2) $ &   $ [[4(g-2), g, d_z/d_x]] $ \\ \hline
			$ \{7,3\} $ & $ 6(g-2) $ &   $ [[21(g-2), g, d_z/d_x]] $ \\ \hline
		
\end{tabular}
\caption{Parameters of AQSCs from $2g$ polygon}
\end{table}

\begin{figure}[H]
	\centering
	\includegraphics{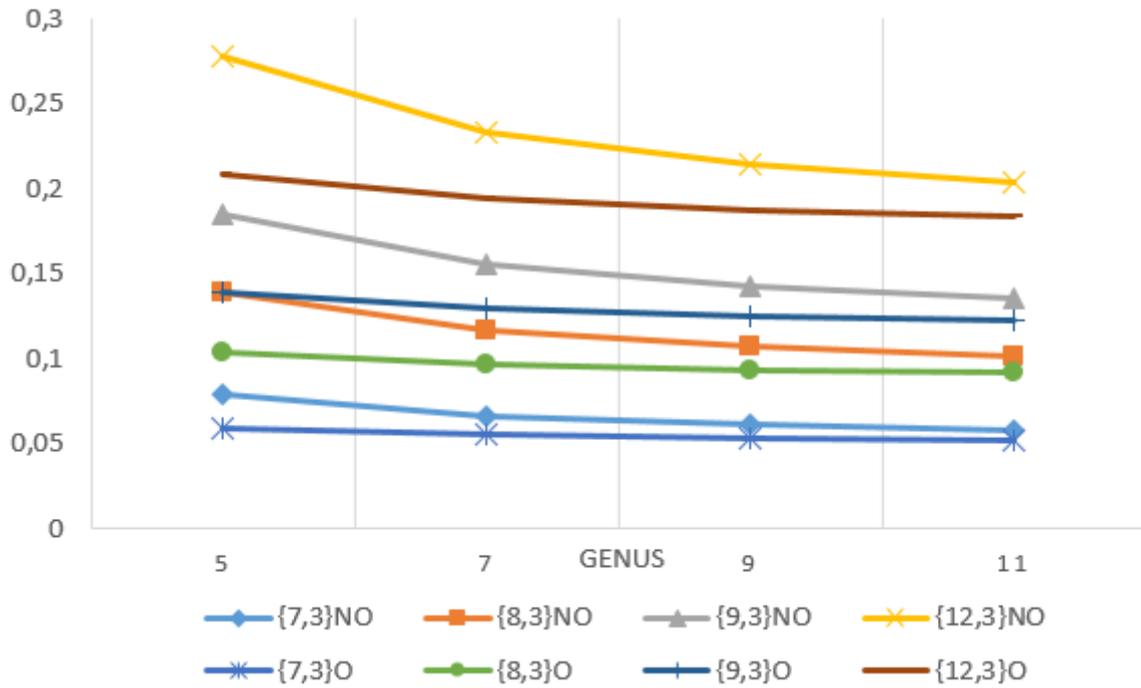}
	\caption{Comparison between $k/n$ for $\{p, q\}$ tessellation on orientable surface ($\{p,q\}$O) and over a non orientable surface ($\{p,q\}$NO) for each odd genus $g$.}
	\label{ratio}
\end{figure}

\begin{figure}[H]
	\centering
	\includegraphics{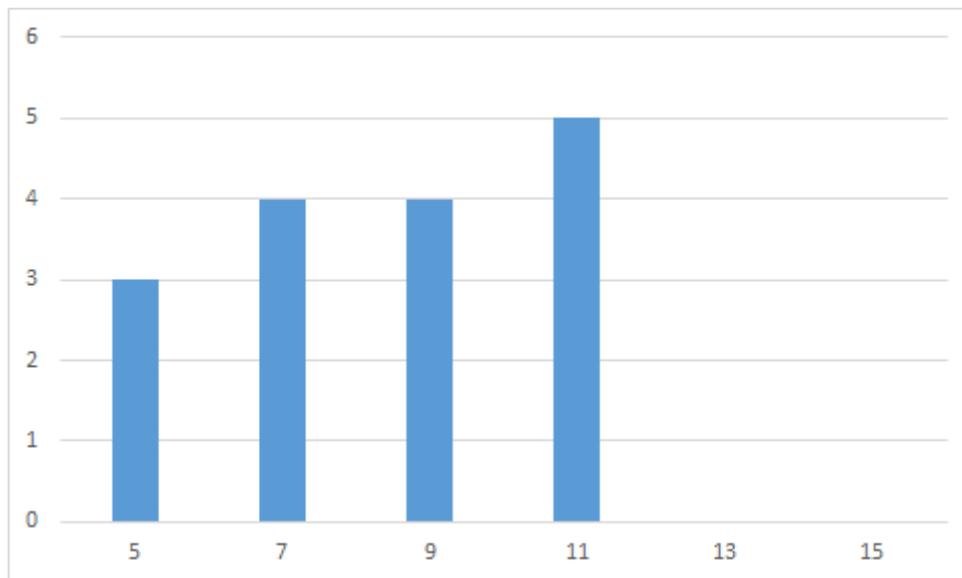}
	\caption{$d_z - d_x$ versus $g$ from $\{7,3\}$ tessellation}
	\label{difference}
\end{figure}

	
\section{Final remarks}\label{final}
		
We have constructed asymmetric quantum surface codes over non-orientable surfaces.
The new AQSC obtained from a $\{p, q\}$
tessellation over a non-orientable surface of even genus $g = 2h$ has the same parameters as
the AQSC generated by a $\{p, q\}$ tessellation over an orientable
surface $S_h$ of genus $h$. Moreover,
if $S$ is a non-orientable surface of genus $g$, we have shown that the new
AQSC constructed on a $\{p, q\}$ tessellation over $S$ has
ratio $k/n$ better than the ratio of an AQSC constructed on the same tessellation
over an orientable surface of the same genus $g$. In some specific cases, the new
asymmetric codes present great asymmetry
between $d_z$ and $d_x$.
For instance, if we consider the $\{7, 3\}$ tessellation, we can
observe that when considering surfaces of odd genus, the difference
between $d_z$ and $d_x$ increases.

\bibliography{AsymmetricnonorientF}

\begin{thebibliography}{10}
\expandafter\ifx\csname url\endcsname\relax
  \def\url#1{\texttt{#1}}\fi
\expandafter\ifx\csname urlprefix\endcsname\relax\def\urlprefix{URL }\fi
\expandafter\ifx\csname href\endcsname\relax
  \def\href#1#2{#2} \def\path#1{#1}\fi

\bibitem{Calderbank:1998}
P.~S. A.R.~Calderbank, E.M.~Rains, N.~Sloane., {Quantum error correction via
  codes over $GF(4)$}, IEEE Trans. Inform. Theory 4~(44) (1998) 1369--1387.

\bibitem{gottesmanstabilizer}
D.~Gottesman, Stabilizer codes and quantum error correction.
\newblock \href {https://doi.org/10.48550/ARXIV.QUANT-PH/9705052}
  {\path{doi:10.48550/ARXIV.QUANT-PH/9705052}}.

\bibitem{Kitaev}
A.~Kitaev,
  \href{http://www.sciencedirect.com/science/article/pii/S0003491602000180}{Fault-tolerant
  quantum computation by anyons}, Annals of Physics 303~(1) (2003) 2 -- 30.
\newblock \href
  {https://doi.org/http://dx.doi.org/10.1016/S0003-4916(02)00018-0}
  {\path{doi:http://dx.doi.org/10.1016/S0003-4916(02)00018-0}}.
\newline\urlprefix\url{http://www.sciencedirect.com/science/article/pii/S0003491602000180}

\bibitem{Albuquerque2009}
C.~{Albuquerque}, R.~{Palazzo}, E.~{Silva}, {Topological quantum codes on
  compact surfaces with genus g>2}, Journal of Mathematical Physics 50~(2)
  (2009) 023513--023513.
\newblock \href {https://doi.org/10.1063/1.3081056}
  {\path{doi:10.1063/1.3081056}}.

\bibitem{Bombin2006Topological}
H.~Bombin, M.~A. Martin-Delgado,
  \href{http://dx.doi.org/10.1103/physreva.73.062303}{{Topological Quantum
  Error Correction with Optimal Encoding Rate}}, Physical Review A 73~(6) (May
  2006).
\newblock \href {http://arxiv.org/abs/quant-ph/0602063}
  {\path{arXiv:quant-ph/0602063}}, \href
  {https://doi.org/10.1103/physreva.73.062303}
  {\path{doi:10.1103/physreva.73.062303}}.
\newline\urlprefix\url{http://dx.doi.org/10.1103/physreva.73.062303}

\bibitem{Kitaev2}
A.~Y. Kitaev, \href{http://stacks.iop.org/0036-0279/52/i=6/a=R02}{Quantum
  computations: algorithms and error correction}, Russian Mathematical Surveys
  52~(6) (1997) 1191.
\newline\urlprefix\url{http://stacks.iop.org/0036-0279/52/i=6/a=R02}

\bibitem{delfosse}
N.~Delfosse, \href{http://dx.doi.org/10.1109/isit.2013.6620360}{Tradeoffs for
  reliable quantum information storage in surface codes and color codes} (2013)
  917--921\href {http://arxiv.org/abs/1301.6588} {\path{arXiv:1301.6588}},
  \href {https://doi.org/10.1109/isit.2013.6620360}
  {\path{doi:10.1109/isit.2013.6620360}}.
\newline\urlprefix\url{http://dx.doi.org/10.1109/isit.2013.6620360}

\bibitem{Sarvepalli:2008}
A.~K. P.K.~Sarvepalli, M.~R\"otteler., {Asymmetric quantum LDPC codes}, Proc.
  Int. Symp. Inform. Theory (ISIT) (2008) 305--309.

\bibitem{Steane1996Simple}
A.~Steane, \href{http://dx.doi.org/10.1103/physreva.54.4741}{{Simple Quantum
  Error Correcting Codes}}, Physical Review A 54~(6) (1996) 4741--4751.
\newblock \href {http://arxiv.org/abs/quant-ph/9605021}
  {\path{arXiv:quant-ph/9605021}}, \href
  {https://doi.org/10.1103/physreva.54.4741}
  {\path{doi:10.1103/physreva.54.4741}}.
\newline\urlprefix\url{http://dx.doi.org/10.1103/physreva.54.4741}

\bibitem{Evans2007}
Z.~W.~E. Evans, A.~M. Stephens, J.~H. Cole, L.~C.~L. Hollenberg,
  \href{https://arxiv.org/abs/0709.3875}{Error correction optimisation in the
  presence of x/z asymmetry} (2007).
\newblock \href {https://doi.org/10.48550/ARXIV.0709.3875}
  {\path{doi:10.48550/ARXIV.0709.3875}}.
\newline\urlprefix\url{https://arxiv.org/abs/0709.3875}

\bibitem{Ezerman2011}
M.~F. Ezerman, S.~Ling, P.~Sole, Additive asymmetric quantum codes, IEEE
  Transactions on Information Theory 57~(8) (2011) 5536--5550.
\newblock \href {https://doi.org/10.1109/TIT.2011.2159040}
  {\path{doi:10.1109/TIT.2011.2159040}}.

\bibitem{Ioffe2007}
L.~Ioffe, M.~M\'ezard,
  \href{https://link.aps.org/doi/10.1103/PhysRevA.75.032345}{Asymmetric quantum
  error-correcting codes}, Phys. Rev. A 75 (2007) 032345.
\newblock \href {https://doi.org/10.1103/PhysRevA.75.032345}
  {\path{doi:10.1103/PhysRevA.75.032345}}.
\newline\urlprefix\url{https://link.aps.org/doi/10.1103/PhysRevA.75.032345}

\bibitem{Guardia2011}
G.~G.~L. Guardia, Asymmetric quantum generalized reed-solomon codes, in: 2011
  IEEE Information Theory Workshop, 2011, pp. 234--236.
\newblock \href {https://doi.org/10.1109/ITW.2011.6089385}
  {\path{doi:10.1109/ITW.2011.6089385}}.

\bibitem{Guardia2012}
G.~G. La~Guardia, \href{https://doi.org/10.1007/s11128-011-0269-3}{Asymmetric
  quantum reed-solomon and generalized reed-solomon codes}, Quantum Information
  Processing 11~(2) (2012) 591--604.
\newblock \href {https://doi.org/10.1007/s11128-011-0269-3}
  {\path{doi:10.1007/s11128-011-0269-3}}.
\newline\urlprefix\url{https://doi.org/10.1007/s11128-011-0269-3}

\bibitem{Guardia2016}
G.~G. La~Guardia, \href{https://doi.org/10.1007/s11128-015-1192-9}{Asymmetric
  quantum convolutional codes}, Quantum Information Processing 15~(1) (2016)
  167--183.
\newblock \href {https://doi.org/10.1007/s11128-015-1192-9}
  {\path{doi:10.1007/s11128-015-1192-9}}.
\newline\urlprefix\url{https://doi.org/10.1007/s11128-015-1192-9}

\bibitem{Albuquerque2022}
C.~D. de~Albuquerque, G.~G. La~Guardia, R.~Palazzo, C.~R. de~Oliveira
  Quilles~Queiroz, V.~L. Vieira,
  \href{https://doi.org/10.1007/s11128-022-03488-8}{Euclidean and hyperbolic
  asymmetric topological quantum codes}, Quantum Information Processing 21~(4)
  (2022) 153.
\newblock \href {https://doi.org/10.1007/s11128-022-03488-8}
  {\path{doi:10.1007/s11128-022-03488-8}}.
\newline\urlprefix\url{https://doi.org/10.1007/s11128-022-03488-8}

\bibitem{Stillwell1995}
J.~Stillwell,
  \href{https://books.google.com.br/books?id=PTNHBGAtjTYC}{{Geometry of
  Surfaces}}, Universitext, Springer New York, 1995.
\newline\urlprefix\url{https://books.google.com.br/books?id=PTNHBGAtjTYC}

\bibitem{Beardon2012}
A.~F. Beardon, \href{https://books.google.com.br/books?id=IpDqBwAAQBAJ}{{The
  Geometry of Discrete Groups}}, Graduate Texts in Mathematics, Springer New
  York, 2012.
\newline\urlprefix\url{https://books.google.com.br/books?id=IpDqBwAAQBAJ}

\bibitem{Katok1992}
S.~Katok, \href{https://books.google.com.br/books?id=pJ2Se3tCr-cC}{{Fuchsian
  Groups}}, Chicago Lectures in Mathematics, University of Chicago Press, 1992.
\newline\urlprefix\url{https://books.google.com.br/books?id=pJ2Se3tCr-cC}

\bibitem{Girondo2012}
E.~Girondo, G.~Gonz\'alez-Diez, Introduction to Compact Riemann Surfaces and
  Dessins d'Enfants, London Mathematical Society Student Texts, Cambridge
  University Press, 2011.
\newblock \href {https://doi.org/10.1017/CBO9781139048910}
  {\path{doi:10.1017/CBO9781139048910}}.

\bibitem{Silva2021}
E.~B. Silva, D.~Maity, D.~Bhowmik, Surface codes and color codes associated
  with non-orientable surfaces, Quantum Information and Computation 21~(13 and
  14) (2021) 1135--1153.
\newblock \href {https://doi.org/https://doi.org/10.26421/QIC21.13-14-4}
  {\path{doi:https://doi.org/10.26421/QIC21.13-14-4}}.

\bibitem{Guardia2020}
G.~G.~L. Guardia, {Quantum Error Correction: Symmetric, Asymmetric,
  Synchronizable and Convolutional Codes}, Springer Cham, 2020.

\end{thebibliography}

\vspace{2ex}

Waldir Silva Soares Jr.\\
Department of Mathematics, Campus de Pato Branco, UTFPR, Universidade Técnica Federal do Parana \\
Via do Conhecimento, s/n-KM 01-Fraron, Pato Branco - PR, 85503-390, Brazil\\
Email: waldirjunior@utfpr.edu.br \\

Douglas Fernando Copatti\\
Department of Mathematics, Instituto Federal do Paraná -  Campus Pitanga \\ 
Rua José de Alencar, 1.080 ? Vila Planalto, Pitanga - PR, 85.200-000, Brazil\\
Email: douglascopatti@gmail.com \\

Giuliano Gadioli La Guardia\\
Department of Mathematics and Statistics, State University of Ponta Grossa (UEPG), 
Campus Uvaranas - Bloco L - Uvaranas Av. General Carlos Cavalcanti, 4748\\
84030-900, Ponta Grossa - PR, Brazil\\
E-mail: gguardia@uepg.br \\

Eduardo Brandani da Silva \\
Department of Mathematics, Maringa State University - UEM \\
Av. Colombo 5790, Maringa - PR, 87020-900 - Brazil \\
Email: ebsilva@uem.br\\

\end{document}